%%%%%%%%%%%%%%%%%%%%%%%%%%%%%%%%%%%%%%%%%%%%%%%%%%%%%%%%%%%%%%%%%%%%%%%%%%%%%%%%
%2345678901234567890123456789012345678901234567890123456789012345678901234567890
%        1         2         3         4         5         6         7         8

\documentclass[letterpaper, 10 pt, conference]{ieeeconf}  % Comment this line out
                                                          % if you need a4paper
%\documentclass[a4paper, 10pt, conference]{ieeeconf}      % Use this line for a4
                                                          % paper

\IEEEoverridecommandlockouts                              % This command is only
                                                          % needed if you want to
                                                          % use the \thanks command
\overrideIEEEmargins
% See the \addtolength command later in the file to balance the column lengths
% on the last page of the document

% The following packages can be found on http:\\www.ctan.org
%\usepackage{graphics} % for pdf, bitmapped graphics files
%\usepackage{epsfig} % for postscript graphics files
%\usepackage{mathptmx} % assumes new font selection scheme installed
%\usepackage{times} % assumes new font selection scheme installed
\usepackage{amsmath} % assumes amsmath package installed
\usepackage{amssymb}  % assumes amsmath package installed

% graphics
\usepackage{graphicx}

%============================================================================================
% commands
%============================================================================================

\newcommand{\pa}{\mathrm{pa}}
\newcommand{\anc}{\mathrm{anc}}

\newcommand{\impute}{\mathrm{do}}

\newcommand{\E}{\mathbb{E}}
\newcommand{\R}{\mathbb{R}}

\newcommand{\Xc}{\mathcal{X}}

%%%%
%NOT YET USED

\newcommand{\trace}{\mathrm{trace}}
\newcommand{\diag}{\mathrm{diag}}

\newcommand{\T}{\top}

%%%%%%%%%%%%%%%%%%%%%%%%%%%%%%%%%%%%%%%%%%%%%%%%%%%%%%%%%%%%%%%%%%%%%%

\usepackage{amsthm}

% algorithms
\usepackage{algorithm}
\usepackage{algorithmic}

\newtheorem{assumption}{Assumption}
\newtheorem{definition}{Definition}
\newtheorem{proposition}{Proposition}

\newtheorem{theorem}{Theorem}
\newtheorem{corollary}{Corollary}

%%%%%%%%%%%%%%%%%%%%%%%%%%%%%%%%%%%%%%%%%%%%%%%%%%%%%%%%%%%%%%%%%%%%%%

\title{\LARGE \bf
Optimal Causal Imputation for Control
}

%\author{ \parbox{3 in}{\centering Huibert Kwakernaak*
%         \thanks{*Use the $\backslash$thanks command to put information here}\\
%         Faculty of Electrical Engineering, Mathematics and Computer Science\\
%         University of Twente\\
%         7500 AE Enschede, The Netherlands\\
%         {\tt\small h.kwakernaak@autsubmit.com}}
%         \hspace*{ 0.5 in}
%         \parbox{3 in}{ \centering Pradeep Misra**
%         \thanks{**The footnote marks may be inserted manually}\\
%        Department of Electrical Engineering \\
%         Wright State University\\
%         Dayton, OH 45435, USA\\
%         {\tt\small pmisra@cs.wright.edu}}
%}

\author{Roy Dong, Eric Mazumdar, and S. Shankar Sastry% <-this % stops a space
%\thanks{This work was not supported by any organization}% <-this % stops a space
\thanks{R. Dong, E. Mazumdar, and S. S. Sastry are with the Department of Electrical Engineering and Computer Sciences, 
        University of California, Berkeley, Berkeley, CA, 94707, USA
        {\tt\footnotesize $\{$roydong,emazumdar,sastry$\}$@eecs.berkeley.edu}}%
}

\begin{document}

\maketitle
\thispagestyle{empty}
\pagestyle{empty}

%%%%%%%%%%%%%%%%%%%%%%%%%%%%%%%%%%%%%%%%%%%%%%%%%%%%%%%%%%%%%%%%%%%%%%

\begin{abstract}

The widespread applicability of analytics in cyber-physical systems has motivated research into causal inference methods. Predictive estimators are not sufficient when analytics are used for decision making; rather, the flow of causal effects must be determined. Generally speaking, these methods focus on estimation of a causal structure from experimental data. In this paper, we consider the dual problem: we fix the causal structure and optimize over causal imputations to achieve desirable system behaviors for a minimal imputation cost. First, we present the optimal causal imputation problem, and then we analyze the problem in two special cases: 1) when the causal imputations can only impute to a fixed value, 2) when the causal structure has linear dynamics with additive Gaussian noise. This optimal causal imputation framework serves to bridge the gap between causal structures and control.

%Additionally, we instantiate the optimal causal imputation framework on distributions estimated from the Residential Energy Consumption Survey dataset. The control aspect of causal structures is the missing component when we try to close the loop on analytics 

\end{abstract}

%%%%%%%%%%%%%%%%%%%%%%%%%%%%%%%%%%%%%%%%%%%%%%%%%%%%%%%%%%%%%%%%%%%%%%
%%%%%%%%%%%%%%%%%%%%%%%%%%%%%%%%%%%%%%%%%%%%%%%%%%%%%%%%%%%%%%%%%%%%%%
%%%%%%%%%%%%%%%%%%%%%%%%%%%%%%%%%%%%%%%%%%%%%%%%%%%%%%%%%%%%%%%%%%%%%%
\section{Introduction}
\label{sec:intro}

% !TEX root = root.tex

Recently, data analytics have achieved amazing levels of success. 
%Two of the main factors driving this success are the advances in methods used and the availability of large amounts of data. 
As analytics penetrate more and more industrial applications, they are increasingly used for decision-making and planning. In these applications, it is important to use estimators that are not only predictive, but estimate the causal structure of the underlying processes.

Correlation is not the same as causation. 
%This is a truism we have known for so long, that there is a Latin saying for it: \emph{post hoc, ergo propter hoc}. 
However, in practice, it is not always easy to apply this principle. 
% Oftentimes, we neglect the equivalent statement: conditioning is not the same as imputing.
In many real-life applications, machine learning is used to determine the relationship between two variables. This analysis is often used as the basis for determining which actions to take. However, an algorithm with low test error does not necessarily mean that the causal effect has been estimated. 

For example, one may train a classifier to estimate the energy consumption of a household given the presence and absence of eco-friendly devices, and this may provide guidelines for which devices should be discounted through rebate programs. Unless the causal structures are explicitly accounted for, there could easily be confounding variables or incorrect causal relationships that change the behavior of the system under consideration.
%However, there could be confounding variables that 

This has motivated new interest in causal inference techniques. Generally speaking, these techniques take experimental data and attempt to uncover the causal structure. (We defer a literature review of these methods to Section~\ref{sec:model}, when a more formal model of causality has been developed.) In this paper, we consider the dual problem: we fix the causal structure and attempt to determine what causal actions will lead to system behaviors we desire at a minimal cost.

\subsection{Outline}

The rest of the paper is organized as follows. We discuss the main paradigms for modeling causality in Section~\ref{sec:back}. In Section~\ref{sec:model}, we outline the mathematical formulation of a causal structure, discuss relevant literature in causal estimation, and define the problem of optimal causal imputation. In Section~\ref{sec:app}, we provide theoretical analysis of two special cases of the optimal causal imputation problem: the case where imputation can only be done to a single value, and the case where the dynamics are linear and the noise is Gaussian. 
%Additionally, we also consider an example using data from the Residential Energy Consumption Survey dataset. 
Finally, we present closing remarks in Section~\ref{sec:conclusions}.

%%%%%%%%%%%%%%%%%%%%%%%%%%%%%%%%%%%%%%%%%%%%%%%%%%%%%%%%%%%%%%%%%%%%%%
%%%%%%%%%%%%%%%%%%%%%%%%%%%%%%%%%%%%%%%%%%%%%%%%%%%%%%%%%%%%%%%%%%%%%%
%%%%%%%%%%%%%%%%%%%%%%%%%%%%%%%%%%%%%%%%%%%%%%%%%%%%%%%%%%%%%%%%%%%%%%
\section{Background}
\label{sec:back}

% !TEX root = root.tex

There are three main paradigms for the mathematical modeling of causality:
\begin{enumerate}
\item Rubin causality
\item Granger causality
\item Pearl's structural equation modeling (SEM)
\end{enumerate}
Each of these paradigms has a vast literature in its own right; we will try to present a few representative samples from each field here. Note that each paradigm uses its own notation, so we will change notation as we switch from approach to approach.

It should be noted that these paradigms are not mutually exclusive: for example, a problem that is modeled using Granger causality can be put into Pearl's SEM if the underlying processes operate in discrete time. Rubin causality can often be phrased as an SEM problem, but in applications this will require more structural assumptions to learn the causal structure. A full exposition of the intersections and non-intersections of these three paradigms is outside the scope of this paper, but we note that these paradigms can often model the same phenomena and shed different insights on the causal behaviors observed.

Rubin causality was first introduced in~\cite{Rubin1974}. In the basic formulation of Rubin causality, we are given some control variable $X$ taking values in $\{0,1\}$. There are also two distinct random variables $Y_0$ and $Y_1$. If $X = 0$, then we observe $Y_0$ and not $Y_1$. If $X = 1$, then we only observe $Y_1$, and not $Y_0$. Another way to write this notationally is that we observe $Y_X$ but do not observe $Y_{1 - X}$, which is often called the counterfactual. The fact that we can only observe one or the other, but not both, is the fundamental misery of causality. 

One of the key results that the Rubin causality paradigm provides is that if $X$ is independent of $Y_0$ and $Y_1$, then randomly assigning $X \in \{0,1\}$ yields a dataset that can provide valid estimates of the counterfactuals; thus, Rubin causality provides the theoretical foundation for randomized control trials. 
This paradigm has also been extended to consider many covariates~\cite{Imbens2015}, handle confounding variables and incorporate instrumental variables~\cite{Imbens2015}, and incorporate some machine learning approaches~\cite{Athey2015}. Sample applications include estimating the causal effect of residential demand response in the Western United States~\cite{Zhou2016} or the causal effects of providing money, healthcare and education to the very poor in Ethiopia, Ghana, Honduras, India, Pakistan, and Peru~\cite{Banerjee1260799}.

Granger causality was first introduced in~\cite{Granger1969}. In this paradigm, we are given data from two stationary random processes $X$ and $Y$, both indexed by time. First, let $U_t$ denote all the information available in the universe at time $t$, and let $(U - X)_t$ denote all the information available at time $t$ except for $X$. Then, let $\sigma^2(Y \vert U)$ denote the error variance of the unbiased, least-squares estimator of $Y_t$ using $U_t$, and similarly let $\sigma^2(Y \vert U - X)$ denote the error variance of the unbiased, least-squares estimator of $Y_t$ using $(U - X)_t$. Then, $X$ Granger-causes (or G-causes, for short) $Y$ if $\sigma^2(Y \vert U) < \sigma^2(Y \vert U - X)$, i.e. the estimator that utilizes $X$ has lower variance on its error than the one that cannot. In other words, $X$ has explanatory power for $Y$.

Granger causality essentially relies on the relationship between causal effects and the arrow of time to distinguish it from general correlations. Although this framework does not address many of the more pernicious philosophical aspects of causality, oftentimes prior knowledge allows us to make the inductive leap from time-lagged correlations to causality. This paradigm is particularly appealing because it is easy to calculate in practice. Sample applications include determining which neuron assemblies Granger-cause other neuron assemblies to fire synapses~\cite{Brovelli2004} or finding that exchange rates Granger-cause stock market prices in Asia~\cite{Granger2000}.

Pearl's SEM approach to causality models the statistical relationship between random elements with a Bayesian network~\cite{Pearl2009}. Bayesian networks are directed acyclic graphs, such that the distribution of a random element at node $i$ only depends on the values taken at the parent nodes. This is meant to model causal relationships between nodes in the graph. Pearl defines the imputation operator as follows: if one imputes at a node $i$, one disconnects $i$ from all its parents and deterministically sets its value to some fixed, predetermined constant. We will be building on this approach in this paper, so we will defer the formal development of Pearl's SEM until Section~\ref{sec:model}. 

At a high level, the imputation operator captures a lot of our intuitions about how the subjunctive conditional should function. When one says \emph{If it had rained today, I would have brought my umbrella}, what does one mean? Intuitively, one often means: `If everything else were the same, only it is the case that it is raining today instead of sunny, these are the actions I would have taken.' One does not mean that the world is structured in a way such that the necessary processes to induce rain today were instead the case. In other words: causal imputation does not travel upstream, e.g. backwards through time. This is captured in Pearl's SEM.

More practically, consider the question: \emph{What are the causal effects of this medication?} If we wish to estimate this, we should `set' medication taken to \texttt{TRUE}, and see the consequences of this imputation. If we do not explicitly `set' this value, then the decision to take medication is a consequence of preceding factors. This makes it difficult to determine if the observed effects are a result of the medication or some other confounding variables\footnote{We note that similar reasoning can be done in the Rubin causality formulation as well.}. Again, this will be more formally discussed in Section~\ref{sec:model}.

Thus, we can think of these paradigms in terms of the central phenomenon it is designed to model. In summary:
\begin{enumerate}
\item Rubin causality is focused on the estimation of the \emph{counterfactual}. 
\item Granger causality is focused on the \emph{explanatory power} one process provides over another process. 
\item Pearl's SEM is focused on the causal effects of the \emph{imputation} operator.
\end{enumerate}

Throughout this paper, we use Pearl's SEM. However, we note again that oftentimes problems framed in the Rubin causality or Granger causality paradigm often can be translated to an equivalent formulation in SEM.

%%%%%%%%%%%%%%%%%%%%%%%%%%%%%%%%%%%%%%%%%%%%%%%%%%%%%%%%%%%%%%%%%%%%%%
\subsection{Notation}

For any set $A$, we denote the powerset of $A$ as $2^A$, which can also be thought of as the set of functions mapping $A \rightarrow \{0,1\}$. For a collection of sets $\{A_i\}_{i \in I}$, we denote the Cartesian product as $\prod_{i \in I} A_i$.

Also, $I$ will denote the identity matrix, where context will often be sufficient to determine its dimensions.

We let $U[a,b]$ denote the uniform distribution on the interval $[a,b]$ and $N(\mu,\Sigma)$ to denote the multivariate Gaussian distribution with mean $\mu$ and covariance matrix $\Sigma$.

%%%%%%%%%%%%%%%%%%%%%%%%%%%%%%%%%%%%%%%%%%%%%%%%%%%%%%%%%%%%%%%%%%%%%%
%%%%%%%%%%%%%%%%%%%%%%%%%%%%%%%%%%%%%%%%%%%%%%%%%%%%%%%%%%%%%%%%%%%%%%
%%%%%%%%%%%%%%%%%%%%%%%%%%%%%%%%%%%%%%%%%%%%%%%%%%%%%%%%%%%%%%%%%%%%%%
\section{Causal Framework}
\label{sec:model}

% !TEX root = root.tex

In this section, we introduce our framework for modeling causal effects, and then define the problem of optimal causal imputation.

%%%%%%%%%%%%%%%%%%%%%%%%%%%%%%%%%%%%%%%%%%%%%%%%%%%%%%%%%%%%%%%%%%%%%%
\subsection{Causal structure}

We build on the structural equation modeling framework presented in~\cite{Pearl2009}. First, we will introduce Bayesian networks.

\begin{definition}
A \emph{directed graph} $G = (V,E)$ is a set of nodes $V$ and a set of edges $E \subset V \times V$. Throughout this paper we will assume $V$ is at most countably infinite. 

A \emph{path} from $v_{0} \in V$ to $v_{N} \in V$ is a finite sequence of edges $(v_0, v_1), (v_1, v_2), \dots, (v_{N-1},v_N) \in E$.
% such that $v_0 = v_{init}$ and $v_N = v_{final}$.
%For an edge $e \in E$, we will use zero-indexing to refer to its elements, i.e. $e(0)$ is the first entry and $e(1)$ is the second entry.

%A directed graph is \emph{acyclic} if there does not exist a sequence of edges $e_1, e_2, \dots e_n \in E$ such that $e_1(0) = e_n(1)$ and $e_i(1) = e_{i+1}(0)$ for $i \in \{1,\dots,n-1\}$. We will refer to such graphs as \emph{directed acyclic graphs (DAGs)}.

We define the \emph{parents} of node $i$ as $\pa(i) = \{ j : (j,i) \in E \}$.

We can iterate this relationship to define the ancestor relationship: let $\pa^n(i) = \{ j : k \in \pa^{n-1}(i), (j,k) \in E \}$, where $\pa^1(i) = \pa(i)$ defined above. Then, the \emph{ancestors} of a node $i$ are given by $\anc(i) = \cup_{n = 1}^\infty \pa^n(i)$.

We say $j$ is a \emph{descendant} of $i$ if $i \in \anc(j)$.

A directed graph is \emph{acyclic} if $i \notin \anc(i)$ for every $i \in V$. We will refer to such graphs as \emph{directed acyclic graphs (DAGs)}.
\end{definition}

\begin{definition}
A \emph{random process} $X$ indexed by a set $V$ is a collection of random elements $(X_i)_{i \in V}$. We will let $\Xc_i$ denote the possible values of $X_i$, and $\Xc = \prod_{i \in V} \Xc_i$.

When there is an associated graph $G = (V,E)$, we will use the notation $\pa(X_i)$ to denote the tuple $(X_j)_{j \in \pa(i)}$.
\end{definition}

\begin{definition}
\label{def:factorization}
A random process $X$ indexed by $V$ is \emph{Markov relative} to a DAG $G = (V,E)$ if its distribution factorizes:
\[
P(X) = \prod_{i \in V} P(X_i \vert \pa(X_i) )
\]
%, for any rectangular measurable set $A = \times_{i \in V} A_i$:
%\[
%\Pr(X \in A) = \prod_{i \in V} \Pr(X_i \in A_i \vert \pa(X_i) \in A_{\pa(i)})
%\]

We can also say that $X$ and $G$ are \emph{compatible}, or  $G$ \emph{represents} $X$.
\end{definition}

This formalization will serve as our model for causality. The interpretation is that if there is an edge going from $i$ to $j$, then $X_i$ \emph{causes} $X_j$.

Throughout this paper, we will treat the causal structure $G = (V,E)$ as given. Estimation of this causal structure is a non-trivial task, and an active topic of research. Some approaches to the task of causal inference include: using metrics like directed information to estimate the causal strength between random variables~\cite{Gourieroux1987,Amblard2012}, graphical-model based methods for estimating structure between random variables~\cite{Pearl1998, Lauritzen2001, Li2015, Athey2015a}, and regression based approaches~\cite{Dawid2000,Heckerman2006,Hoyer2009}. 
Again, this list is far from exhaustive as an extensive literature review of this general field is outside the scope of this paper. For a broader overview of various approaches to the problem of causal inference, see~\cite{Pearl1998,Pearl2009}.

Although the estimation of causal structures is never a simple task, the growing field of research promises more and more applications in which accurate estimation of causal structures is feasible.

Previous work has focused on the \emph{estimation} of causal structures. In contrast, our contribution is to consider the problem of \emph{control} of causal structures. In other words, once we are given a causal structure, how can we impute causal effects to drive the overall system into a desirable state?

For example, once we can estimate the causal effects of issuing rebates for energy-efficiency appliances, how do we best distribute these rebates to induce more energy-efficient consumption patterns? To the best of our knowledge, this is the first paper to consider the problem of when and where to impute on a causal structure.

There is an equivalent formulation of the condition in Definition~\ref{def:factorization} which utilizes \emph{disintegration} results in probability theory. This is referred to as the structural equation modeling framework in~\cite{Pearl2009}.

\begin{proposition} \cite{Kallenberg2002, Pearl2009}
A random process $X$ indexed by $V$ is Markov relative to $G = (V,E)$ if and only if there exists a collection of functions $(f_i)_{i \in V}$ and independent random elements $(\xi_i)_{i \in V}$ such that:
\begin{equation}
\label{eq:sem}
X_i = f_i(\pa(X_i),\xi_i)
\end{equation}

Furthermore, if $\Xc_i$ are Borel spaces\footnote{A measurable space $S$ is Borel if there exists a measurable function $S \rightarrow [0,1]$ with a measurable inverse.}, then $\xi_i$ can be taken to be $U[0,1]$.
\end{proposition}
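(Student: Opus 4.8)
The plan is to prove the two implications separately, using as the central engine the disintegration theorem for regular conditional distributions on Borel spaces together with its functional-representation corollary: a probability kernel whose target is a Borel space can be realized as a measurable function of the source variable and an independent $U[0,1]$ random variable (the abstract inverse-CDF / probability-integral-transform statement, e.g. Kallenberg's functional representation lemma). Since $G$ is a DAG on a countable vertex set, I would also fix a topological enumeration of $V$ --- an ordering in which every parent precedes its child --- so that both the verification of the factorization and the recursive construction of the $f_i$ can proceed by induction along the order.

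For the direction \eqref{eq:sem} $\Rightarrow$ Markov, assume $X_i = f_i(\pa(X_i),\xi_i)$ with the $\xi_i$ independent. Going along the topological order, an easy induction shows each $X_i$ is a measurable function of $\{\xi_j : j = i \text{ or } j \in \anc(i)\}$, so $X_i$ depends only on the noises of its ancestors and itself. Because $\xi_i$ is independent of all the other noise variables, conditionally on $\pa(X_i)$ the variable $X_i$ is independent of the collection of non-descendants of $i$; this is precisely the local (directed) Markov property. A standard argument then upgrades the local Markov property on a DAG to the global factorization $P(X) = \prod_{i} P(X_i \mid \pa(X_i))$ of Definition~\ref{def:factorization}, completing this direction.

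For the converse, Markov $\Rightarrow$ \eqref{eq:sem}, I would start from the factorization and use disintegration to obtain, for each $i$, a regular conditional distribution of $X_i$ given $\pa(X_i)$, i.e. a probability kernel $\kappa_i$ from $\prod_{j \in \pa(i)} \Xc_j$ into $\Xc_i$ with $\kappa_i(\pa(X_i),\cdot) = P(X_i \mid \pa(X_i))$. Here is where the Borel hypothesis enters: since $\Xc_i$ is Borel, the functional-representation lemma produces a measurable map $f_i : \prod_{j \in \pa(i)} \Xc_j \times [0,1] \to \Xc_i$ and a $U[0,1]$ variable $\xi_i$ such that $f_i(x,\xi_i)$ has law $\kappa_i(x,\cdot)$ for every $x$. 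I would then take $(\xi_i)_{i \in V}$ mutually independent and define $X_i' = f_i(\pa(X_i'),\xi_i)$ recursively along the topological order. By construction each conditional law $P(X_i' \mid \pa(X_i'))$ equals $\kappa_i$, and mutual independence of the $\xi_i$ makes the joint law of $(X_i')$ factorize exactly as $P(X)$; hence $(X_i') \eqd X$, and relabeling gives the desired representation with uniform noises.

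The main obstacle is this converse, and within it the functional-representation step: realizing the whole family of conditional kernels \emph{simultaneously} by measurable functions of independent uniforms is exactly what forces the Borel assumption, and it is the substantive probabilistic content (the remainder is bookkeeping with the topological order and the equivalence between factorization and conditional independence). A secondary technical point I would be careful about is the countably infinite case: a topological enumeration of order type at most $\omega$ need not exist when a node has an infinite ascending chain of ancestors, so for the recursion I would either assume (or verify) that ancestor sets are well-founded with order type $\le \omega$, or instead build the process as the almost-sure limit of its restrictions to finite ancestral subgraphs, checking measurability of the resulting maps.
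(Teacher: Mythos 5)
The paper does not actually prove this proposition---it is quoted from \cite{Kallenberg2002, Pearl2009}---and your argument is exactly the standard one behind those citations: induction along a topological order plus the local-to-global Markov equivalence for one direction, and disintegration followed by Kallenberg's functional-representation (noise-outsourcing) lemma for the other, so it is correct and takes the same route as the sources the paper relies on. Your closing caveats are well placed: the converse as you run it produces a version of $X$ equal in distribution (one would invoke the transfer theorem to get the identity to hold almost surely for $X$ itself), and the possible absence of an order-type-$\omega$ topological enumeration for a countable DAG with infinite ancestor chains is a genuine subtlety the paper glosses over, which your finite-ancestral-subgraph workaround addresses.
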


We note that Borel spaces are a very general category of measurable spaces: they include Polish spaces equipped with the Borel $\sigma$-algebra\footnote{A topological space $T$ is Polish if it is separable and completely metrizable. The Borel $\sigma$-algebra of a topological space is the smallest $\sigma$-algebra containing all the open sets.}. This includes finite sets, $\R$, $\R^n$, $L^p(\R^n)$, the set of $p$-integrable functions defined on $\R^n$. Additionally, the space of probability distributions on any Borel space is also a Borel space.

\begin{assumption}
Throughout the rest of this paper, we will always use $X$ to denote a random process indexed by $V$ that is Markov relative to a DAG $G = (V,E)$, where $X_i$ takes values in $\Xc_i$. Similarly, $f_i$ shall denote the functions as specified in Equation~\ref{eq:sem}, and similarly $\xi_i$.
\end{assumption}

%%%%%%%%%%%%%%%%%%%%%%%%%%%%%%%%%%%%%%%%%%%%%%%%%%%%%%%%%%%%%%%%%%%%%%
\subsection{Causal imputation}

In this section, we will formally define the \emph{causal imputation} operation. Intuitively, imputation of $X$ produces a new random process $Y$.  This random process $Y$ is equal to $X$ prior to the causal imputation, is forced to some value at the node of imputation, and experiences causal effects after the node of imputation. This is formally defined below.

\begin{definition} \cite{Pearl2009}
A random process $Y$ indexed by $V$ is the \emph{imputation} of $X$ at $i \in V$ to a constant $x_i \in \Xc_i$ if:
\begin{itemize}
\item $Y_i = x_i$.
\item For any $j$ that is not a descendant of $i$, $Y_j = X_j$.
\item For any $j$ that is a descendant of $i$, $Y_j = f_j(\pa(Y_j),\xi_j)$.
\end{itemize}

If this is the case, we will write $Y = \impute(X;i,x_i)$.
\end{definition}

The imputation operator produces a copy of the original process that is exactly equal at all nodes that do not causally depend on the node of imputation $X_i$. At the point of imputation, the node is disconnected from its parents and forced a constant value $x_i$. The nodes $X_j$ that causally depend on $X_i$ are replaced with new values that depend on the causal effects of $X_i$, keeping the innovation terms $\xi$ constant throughout.

Referring back to the discussions in Section~\ref{sec:back}, this can be thought of as manually setting the value of $X_i$ to $x_i$. This should be something that is done exogenously, as a control variable, rather than as a consequence of endogenous factors: this is why $Y_i$ is disconnected from $\pa(Y_i)$.

From this definition, it immediately follows that the imputation operator commutes.

\begin{proposition}
\label{prop:commute}
Let $i, j \in V$ such that $i \neq j$ and $x_i \in \Xc_i$ and $x_j \in \Xc_j$. Then $\impute( \impute(X;i,x_i); j,x_j ) = \impute( \impute(X;j,x_j) ; i,x_i)$ almost surely.
\end{proposition}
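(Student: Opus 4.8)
The plan is to reduce the imputation operator to a modification of the structural equations~\eqref{eq:sem} and then exploit the evident symmetry of the twice-modified system. First I would restate the imputation in structural-equation form: I claim that $\impute(X;i,x_i)$ is, almost surely, the unique process $Y$ solving the system obtained from Equation~\ref{eq:sem} by deleting the $i$th equation and clamping node $i$ to the constant,
\[
Y_k = \begin{cases} x_i, & k = i,\\ f_k(\pa(Y_k),\xi_k), & k \neq i. \end{cases}
\]
This is the precise content of ``disconnecting $i$ from its parents'': node $i$ no longer reads $\pa(i)$. To see that this matches the Definition, note that for any $k$ that is not a descendant of $i$, none of its ancestors is $i$, so the clamp never propagates to $k$ and $Y_k = X_k = f_k(\pa(Y_k),\xi_k)$; for $k$ a descendant of $i$ the two descriptions agree verbatim.

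Next I would apply this reduction twice. Writing $X' = \impute(X;i,x_i)$, the process $X'$ solves the clamped system above; imputing again at $j \neq i$ clamps $Y_j \equiv x_j$ and leaves every other equation of $X'$ in place, so that $\impute(\impute(X;i,x_i);j,x_j)$ solves
\[
Y_k = \begin{cases} x_i, & k = i,\\ x_j, & k = j,\\ f_k(\pa(Y_k),\xi_k), & k \notin \{i,j\}. \end{cases}
\]
This system is manifestly invariant under swapping $(i,x_i) \leftrightarrow (j,x_j)$, and the identical argument in the opposite order shows $\impute(\impute(X;j,x_j);i,x_i)$ solves it as well. I would then close by invoking uniqueness: two processes solving the same clamped system on the DAG $G$ agree almost surely. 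This is a strong induction along the ancestor ordering — the clamped nodes $i,j$ take their forced values, while any other node $k$ is determined by $f_k$ from the already-matched values on $\pa(k) \subseteq \anc(k)$ — and it terminates precisely because $G$ is acyclic.

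The step I expect to be the main obstacle is the treatment of node $i$ inside the second imputation. Although $i$ may be a descendant of $j$ in the original $G$, after the first imputation $i$ has been disconnected from its parents, so no directed path reaches $i$ and $i$ is no longer a descendant of $j$; hence the second imputation does \emph{not} recompute $Y_i$ via $f_i$ but retains the clamp $Y_i = x_i$. This point is essential rather than cosmetic: under a naive reading that keeps the original descendant relation of $G$, commutativity is actually false, since for $j \in \anc(i)$ that reading would set $Y_i = f_i(\pa(Y_i),\xi_i) \neq x_i$ in one order but $Y_i = x_i$ in the other. A secondary technical point is the well-foundedness of the inductive ordering when $V$ is countably infinite, which requires the ancestor relation on $G$ to support the recursion; this is implicit in the very existence of the process $X$.
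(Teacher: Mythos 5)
Your proof is correct, and it is worth noting that the paper itself offers no argument here: Proposition~\ref{prop:commute} is introduced with the remark that it ``immediately follows'' from the definition, so there is no proof to compare against line by line. Your route --- recharacterizing $\impute(X;i,x_i)$ as the a.s.-unique solution of the clamped structural system, observing that two clamps yield a system symmetric in $(i,x_i)$ and $(j,x_j)$, and closing with uniqueness by induction along the ancestor ordering --- is a faithful formalization of what the authors presumably have in mind, and the verification that the clamped system agrees with the definition on non-descendants of $i$ (because $i \notin \anc(k)$ forces $Y_k = X_k$ inductively) is exactly the right bookkeeping. The genuinely valuable content of your write-up is the observation that the second imputation must be taken relative to the \emph{modified} structure, in which $i$ has been severed from $\pa(i)$ and so is no longer a descendant of $j$; under a literal reading that reuses the descendant relation of the original $G$, the order $j$-then-$i$ versus $i$-then-$j$ would disagree at node $i$ whenever $j \in \anc(i)$, and the proposition would be false. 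The paper's definition of iterated imputation is silent on this point, so your proof does more than fill a gap --- it pins down the reading under which the statement holds. Your caveat about well-foundedness of the recursion for countably infinite $V$ is also legitimate (a DAG can have infinite descending ancestor chains even when the process $X$ exists), though it is a defect of the paper's definitions rather than of your argument; flagging it as a standing regularity assumption, as you do, is the right call.
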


This allows us to define imputation on any set of nodes, rather than just at a single node.

\begin{definition}
For any $I \subset V$ and $x_I \in \prod_{i \in I} \Xc_i$, we define the imputation $Y = \impute(X;I,x_I)$ as the sequential application of element-wise $\impute$ operations. This is almost surely unique by Proposition~\ref{prop:commute}.
\end{definition}

%%%%%%%%%%%%%%%%%%%%%%%%%%%%%%%%%%%%%%%%%%%%%%%%%%%%%%%%%%%%%%%%%%%%%%
\subsection{Optimal causal imputation}

In the previous section, we defined the causal imputation operator. We can think of our system designer as having the capacity of issuing control commands that have causal effects on the system downstream. When we can define the cost of imputation as well as a control objective, we can formulate the optimal causal imputation problem.

We suppose we are given a collection of functions $(c_I)_{I \subset V}$ where each $c_I : \prod_{i \in I} \Xc_i \rightarrow \R$. These functions can be interpreted as the cost of imputation at a set of nodes $I \subset V$. Drawing on our running example,  $c$ represents the cost of issuing rebates for eco-friendly refrigerators at a set of households.

Furthermore, we suppose we are given an operational objective in the form of a cost function $g : \Xc \rightarrow \R$. For example, $g$ can be a penalty on energy-wasting consumption patterns.

\begin{definition}
\label{def:optimal_causal_imputation}

The problem of \emph{optimal causal imputation} is given by:
\begin{eqnarray}
\min_{I \subset V} \min_{x_I \in \prod_{i \in I} \Xc_i}  & c_I(x_I) + \E_Y[g(Y)] \\
\mathrm{subject~to~} & Y = \impute(X;I,x_I)
\end{eqnarray}
\end{definition}

%%%%%%%%%%%%%%%%%%%%%%%%%%%%%%%%%%%%%%%%%%%%%%%%%%%%%%%%%%%%%%%%%%%%%%
%%%%%%%%%%%%%%%%%%%%%%%%%%%%%%%%%%%%%%%%%%%%%%%%%%%%%%%%%%%%%%%%%%%%%%
%%%%%%%%%%%%%%%%%%%%%%%%%%%%%%%%%%%%%%%%%%%%%%%%%%%%%%%%%%%%%%%%%%%%%%
\section{Applications}
\label{sec:app}

% !TEX root = root.tex

In Section~\ref{sec:model}, we defined the optimal causal imputation problem in its full generality. In this section, we shall provide methods to solve the optimal causal imputation problem in special cases. In particular, we consider two contexts: 1) situations where imputation is only allowed to a single value, 2) situations where the dynamics are linear-Gaussian. In both instances, we shall assume $\Xc_i = \R^{n_i}$ for some $n_i$.

%%%%%%%%%%%%%%%%%%%%%%%%%%%%%%%%%%%%%%%%%%%%%%%%%%%%%%%%%%%%%%%%%%%%%%
\subsection{Single-value case}

In many applications where we can causally impute values, we can only impute to one particular value. For example, when issuing incentives, we may be able to only offer one form of rebate to consumers. Motivated by this context, we consider situations where the optimal causal imputation problem can be reduced to one of submodular optimization.

\begin{assumption}
In this section, we assume $V$ is a finite set and that for each $I \subset V$, there exists an $x_I$ such that $c_I(x_I) < \infty$ and $c_I(x_I') = \infty$ for any $x_I' \neq x_I$. We shall refer to this as the single-value case.

In the single-value case, we use the shorthand $F(I) = c(I) + \E[g(\impute(X;I))]$, where we drop dependencies on $x$ as it can only take a single value.
\end{assumption}

%%%%%%%%%%%%%%%%%%%%%%
\subsubsection{Submodular minimization}

\begin{definition}
The set mapping $F : 2^V \rightarrow \R$ is \emph{submodular} if for any $I_1 \subset I_2 \subset V$ and $i \in G \setminus I_2$, we have:
\begin{equation}
\label{eq:submodular}
F(I_1 \cup \{i\}) - F(I_1) \geq F(I_2 \cup \{i\}) - F(I_2)
\end{equation}
\end{definition}
Intuitively, this definition is motivated by economies of scale. We often expect economies of scale from these imputations, e.g. the per-customer cost of a rebate is non-increasing as the number of customers increases, due to bulk-purchase discounts. In our running example, the additional cost of issuing a rebate to customer $i$ is higher when you have issued few rebates than when you have issued a lot of rebates. (When $I_1 \subset I_2$, then $I_2$ corresponds to the situation where you have issued more rebates than $I_1$.)

From a combinatorial optimization perspective, submodularity is a very well-behaved property that makes optimization, or approximate optimization, very tractable. We shall quickly outline the details now, but we refer the interested reader to~\cite{Schrijver2003} for more details.

First, note that there is a very direct correspondence between a subset $I \subset V$ and a tuple in $\{0,1\}^V$. For example, if $V = \{0,1,2\}$, then $(0,1,1)$ corresponds to the subset $\{1,2\}$. Thus, we can think of $F : \{0,1\}^V \rightarrow \R$. Now, we define the Lov\'asz extension~\cite{Lovasz1983}.

\begin{definition}
Let $\lambda \sim U[0,1]$. Then, for any set mapping $F : \{0,1\}^V \rightarrow \R$, we define the \emph{Lov\'asz extension} $f : [0,1]^V \rightarrow \R$ as:
\[
f(z) = \E_\lambda[ F( \{ i : z_i > \lambda \} ) ]
\]
For the rest of this section, an unindexed $f$ will denote the Lov\'asz extension of $F$.
\end{definition}

We note two nice properties of the Lov\'asz extension immediately.

\begin{proposition} \cite{Lovasz1983}
For any $z \in \{0,1\}^V$, we have $f(z) = F(z)$. 
\end{proposition}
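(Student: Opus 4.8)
The plan is to exploit the fact that on a vertex of the hypercube the threshold set $\{i : z_i > \lambda\}$ does not actually vary with $\lambda$ over the relevant range, so that the expectation defining $f(z)$ collapses to a single evaluation of $F$. First I would fix an arbitrary $z \in \{0,1\}^V$ and recall that since $\lambda \sim U[0,1]$, we have $\lambda \in (0,1)$ almost surely; the endpoints $\lambda \in \{0,1\}$ form a set of measure zero and therefore contribute nothing to the expectation.

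Next I would compute the threshold set coordinate by coordinate for $\lambda \in (0,1)$. If $z_i = 0$, then the condition $z_i > \lambda$ reads $0 > \lambda$, which is false for every $\lambda > 0$; if $z_i = 1$, then $z_i > \lambda$ reads $1 > \lambda$, which is true for every $\lambda < 1$. Hence for every $\lambda \in (0,1)$ the set $\{i : z_i > \lambda\}$ equals precisely $\{i : z_i = 1\}$, which is exactly the subset of $V$ that $z$ encodes under the identification of $\{0,1\}^V$ with $2^V$. In particular, this set is constant in $\lambda$ on $(0,1)$.

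It follows that the integrand $F(\{i : z_i > \lambda\})$ equals the constant $F(z)$ for almost every $\lambda$, and therefore $f(z) = \E_\lambda[ F(\{i : z_i > \lambda\}) ] = F(z)$, which is the claim. There is essentially no difficult step here; the only point requiring a word of care is the behavior at the boundary values of $\lambda$, where the strict inequality can alter the threshold set (for instance, at $\lambda = 1$ no coordinate satisfies $z_i > \lambda$), but since these values occur with probability zero under the uniform distribution they do not affect the expectation. This also clarifies why the definition uses a strict inequality: it ensures that at the all-zeros vertex the threshold set is empty for all $\lambda > 0$, consistent with the value $F(\emptyset)$.
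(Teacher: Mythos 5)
Your proof is correct: for $z\in\{0,1\}^V$ the threshold set $\{i : z_i>\lambda\}$ equals $\{i : z_i=1\}$ for every $\lambda\in(0,1)$, so the expectation over $\lambda$ is an expectation of a constant and collapses to $F(z)$. The paper states this proposition as a cited result from Lov\'asz without proof, and your argument is the standard direct verification, including the correct observation that the boundary value $\lambda=1$ is a null set under $U[0,1]$ and hence harmless.
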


\begin{proposition} \cite{Lovasz1983}
$F$ is submodular if and only if $f$ is convex.
\end{proposition}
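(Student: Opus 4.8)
The plan is to prove the classical characterization that a set function $F$ is submodular if and only if its Lov\'asz extension $f$ is convex. I would treat the two directions separately, since the forward direction (submodular $\Rightarrow$ convex) carries most of the analytical content and the reverse follows by a short specialization argument.

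First I would rewrite the Lov\'asz extension in a more tractable closed form. Fixing $z \in [0,1]^V$, I would sort its coordinates $z_{\sigma(1)} \geq z_{\sigma(2)} \geq \dots \geq z_{\sigma(n)}$ for some permutation $\sigma$, and observe that as $\lambda$ sweeps over $[0,1]$, the set $\{i : z_i > \lambda\}$ takes only the nested values $S_k = \{\sigma(1),\dots,\sigma(k)\}$, each for a $\lambda$-interval of length $z_{\sigma(k)} - z_{\sigma(k+1)}$ (with the convention $z_{\sigma(n+1)} = 0$ and $z_{\sigma(0)} = 1$). Taking the expectation over $\lambda \sim U[0,1]$ then collapses the definition to the finite sum
\begin{equation}
\label{eq:lovasz-sum}
f(z) = \sum_{k=1}^{n} (z_{\sigma(k)} - z_{\sigma(k+1)}) F(S_k).
\end{equation}
This is the standard piecewise-linear representation: $f$ is linear on each of the simplicial cones determined by a fixed ordering of the coordinates, so $f$ is automatically continuous and piecewise linear on $[0,1]^V$.

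For the forward direction, the cleanest route is to exhibit $f$ as a pointwise maximum (or, depending on sign conventions, derive it from a base polytope). Concretely, I would show that when $F$ is submodular, $f(z) = \max_{w \in B(F)} \braket{w}{z}$ where $B(F)$ is the base polytope of $F$; a maximum of linear functions over a fixed polytope is convex, giving convexity immediately. The technical heart here is the greedy-algorithm lemma of Edmonds: for submodular $F$, the greedy vector $w$ defined by $w_{\sigma(k)} = F(S_k) - F(S_{k-1})$ lies in $B(F)$ and maximizes $\braket{w}{z}$, and its inner product with $z$ equals the sum in \eqref{eq:lovasz-sum} after Abel summation. Verifying that this greedy $w$ is feasible for $B(F)$ is exactly where submodularity is used, and this is the step I expect to be the main obstacle.

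For the reverse direction, I would assume $f$ is convex and recover the submodular inequality \eqref{eq:submodular} by evaluating $f$ at well-chosen binary and near-binary points. Using that $f(z)=F(z)$ on $\{0,1\}^V$ (the preceding proposition) together with the additivity structure of \eqref{eq:lovasz-sum}, convexity applied to the midpoint of the indicator vectors of $I_1 \cup \{i\}$ and $I_2$ yields $f\bigl(\tfrac12(\mathbf{1}_{I_1\cup\{i\}} + \mathbf{1}_{I_2})\bigr) \leq \tfrac12\bigl(F(I_1\cup\{i\}) + F(I_2)\bigr)$, and computing the left-hand side explicitly from \eqref{eq:lovasz-sum} produces $\tfrac12\bigl(F(I_1) + F(I_2\cup\{i\})\bigr)$; rearranging gives the desired diminishing-returns inequality. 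The only care needed is choosing sets with $I_1 \subset I_2$ and $i \notin I_2$ so that the sorted-coordinate bookkeeping in the midpoint evaluation is clean.
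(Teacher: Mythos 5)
The paper does not prove this proposition---it is quoted from Lov\'asz (1983) with no in-paper argument---so your proposal can only be judged on its own terms; what you outline is the standard proof from the combinatorial optimization literature, and the overall strategy is sound. One bookkeeping issue first: your closed-form sum for $f$ omits the contribution of the empty set. Under the paper's definition $f(z)=\E_\lambda[F(\{i:z_i>\lambda\})]$, the set $\{i:z_i>\lambda\}$ is empty for $\lambda\geq z_{\sigma(1)}$, so $f$ carries an extra term $(1-z_{\sigma(1)})F(\emptyset)$, which is \emph{not} affine in $z$ and therefore not automatically harmless. The clean fix is to note that replacing $F$ by $F-F(\emptyset)$ preserves submodularity and shifts the paper's $f$ by the additive constant $F(\emptyset)$, so one may assume $F(\emptyset)=0$ without loss of generality; with that normalization your sum representation is exact. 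State this reduction explicitly.

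The substantive gap is in the forward direction. You correctly identify the route---exhibit $f$ as $\max_{w\in B(F)}\langle w,z\rangle$, a pointwise maximum of linear functions---but the entire content of that direction is the Edmonds greedy lemma: that the vector $w_{\sigma(k)}=F(S_k)-F(S_{k-1})$ is feasible for the base polytope (i.e.\ $\sum_{j\in T}w_j\leq F(T)$ for every $T$, verified by telescoping along the greedy order and invoking diminishing returns) and that it attains the maximum. You flag this yourself as ``the main obstacle'' and do not carry it out, so as written the forward direction is a plan rather than a proof; submodularity is used nowhere else in that direction, so nothing has actually been established until that lemma is proved. By contrast, your backward direction is complete and correct: the midpoint of $\mathbf{1}_{I_1\cup\{i\}}$ and $\mathbf{1}_{I_2}$ has level sets $I_1$ (for $\lambda\in[1/2,1)$) and $I_2\cup\{i\}$ (for $\lambda\in[0,1/2)$), each of measure $1/2$, so the paper's definition evaluates $f$ there to $\tfrac12\bigl(F(I_1)+F(I_2\cup\{i\})\bigr)$, and convexity plus agreement of $f$ with $F$ on $\{0,1\}^V$ rearranges directly to the inequality in Equation~\eqref{eq:submodular}.
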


Note that the optimal causal imputation problem can be written as:
\[
\min_{z \in \{0,1\}^V} F(z)
\]
The Lov\'asz extension provides us with an easy solution to the problem.

\begin{proposition} \cite{Lovasz1983}
If $F$ is submodular, then the following is a convex optimization program.
\begin{equation}
\label{eq:lovasz_relax}
\min_{z \in [0,1]^V} f(z)
\end{equation}
Furthermore, there exist minimizers of~\eqref{eq:lovasz_relax} in $\{0,1\}^V$.
\end{proposition}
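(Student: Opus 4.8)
The plan is to handle the two assertions separately: the convexity of the program follows almost immediately from the two propositions already established about the Lov\'asz extension, while the existence of an integral minimizer requires a short argument that exploits the averaging structure in the definition of $f$.

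For the convexity claim, I would first observe that the feasible region $[0,1]^V$ is a product of compact intervals and hence a compact convex set. By the preceding proposition, submodularity of $F$ is equivalent to convexity of its Lov\'asz extension $f$. Minimizing a convex function over a convex set is by definition a convex optimization program, which settles the first claim with essentially no computation. Since $V$ is finite under the single-value assumption, $f$ is continuous on a compact domain, so a minimizer is guaranteed to exist.

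For the existence of a minimizer in $\{0,1\}^V$, the key is the defining identity $f(z) = \E_\lambda[F(\{i : z_i > \lambda\})]$. For each fixed value of $\lambda$, the threshold set $\{i : z_i > \lambda\}$ is a genuine subset of $V$, so its $F$-value is at least $m := \min_{w \in \{0,1\}^V} F(w)$, where this minimum over the finite cube $\{0,1\}^V$ is attained. Taking expectations over $\lambda$ preserves the lower bound, giving $f(z) \ge m$ for every $z \in [0,1]^V$. Conversely, choosing $w^\ast \in \{0,1\}^V$ that achieves $m$ and invoking the earlier proposition that $f$ agrees with $F$ on the vertices, I get $f(w^\ast) = F(w^\ast) = m$; since $w^\ast \in [0,1]^V$, the relaxation attains the value $m$ at an integral point. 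Combining the two bounds yields $\min_{z \in [0,1]^V} f(z) = m$, realized at $w^\ast \in \{0,1\}^V$.

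The main point to flag is that there is no genuinely hard step here: the integral-minimizer conclusion does not actually rely on submodularity and holds for an arbitrary set function $F$, whereas the convexity statement is exactly where submodularity is needed. The only technical care required is to ensure that $f$ is well defined as an expectation and that the relevant minima are attained, both of which are guaranteed by the finiteness of $V$ and the continuity of the piecewise-linear extension $f$.
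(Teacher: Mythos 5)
The paper does not prove this proposition at all --- it is quoted from \cite{Lovasz1983} without argument --- so there is no in-paper proof to compare against; your proposal stands or falls on its own, and it stands. The convexity half is exactly the intended one-liner: the preceding proposition gives convexity of $f$ from submodularity of $F$, and $[0,1]^V$ is convex, so the program is convex. The integrality half is also correct: for every fixed $\lambda$ the level set $\{i : z_i > \lambda\}$ is an honest subset of $V$ (including possibly $\emptyset$, which corresponds to the all-zeros vertex), so $F$ evaluated there is bounded below by $m = \min_{w \in \{0,1\}^V} F(w)$, the expectation inherits the bound, and $f(w^\ast) = F(w^\ast) = m$ by the agreement of $f$ with $F$ on vertices closes the sandwich. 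Your closing observation is the most valuable part and is worth emphasizing: with this definition of the Lov\'asz extension, the existence of an integral minimizer is a property of the extension itself and holds for \emph{any} set function on a finite ground set; submodularity is needed only so that the relaxed problem is convex and hence efficiently solvable, which is what makes the equivalence algorithmically useful. The only attainment issues (minimum over the finite cube, minimum of the continuous piecewise-linear $f$ over the compact cube) are handled by finiteness of $V$, as you note.
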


In other words, the combinatorial optimization problem can be solved tractably with convex optimization if $F$ is submodular. Thus, we are motivated in searching for conditions under which $F(I) = c(I) + \E[g(\impute(X;I))]$ is submodular. We provide a common sufficient condition for submodularity of $F$ in the following theorem:
\begin{theorem}
\label{theorem:submodular}
If:
\begin{itemize}
\item $g(Y) = \|Y_i - \E Y_i \|_2^2$ for some $i \in V$.
\item There exists functions $f_j^\xi$ such that, if $X_j$ has no parents, $X_j = f_j^\xi(\xi_j)$ and otherwise $X_j = \pa(X_j) + f_j^\xi(\xi_j)$.
\item For each $j \in \anc(i)$, there exists one unique path from $j$ to $i$.
\item $c(I)$ is submodular.
\end{itemize}
Then $F(I) = c(I) + \E[g(\impute(X;I))]$ is submodular.
\end{theorem}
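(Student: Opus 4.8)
The plan is to exploit the fact that submodularity is preserved under addition: since $c$ is submodular by hypothesis, it suffices to prove that the variance term $G(I) := \E[g(\impute(X;I))]$ is submodular, after which $F = c + G$ is submodular as a sum of two submodular set functions. Because $g(Y) = \|Y_i - \E Y_i\|_2^2$, we have $G(I) = \trace \cov(Y_i)$ with $Y = \impute(X;I)$, so the entire problem collapses to understanding how the variance of the single target node $Y_i$ depends, as a set function, on the imputation set $I$.

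First I would obtain a closed form for $Y_i$ by unrolling the structural recursion. Writing $\eta_j = f_j^\xi(\xi_j)$ for the independent innovation at node $j$, the additive dynamics $X_j = \pa(X_j) + \eta_j$ let me expand $X_i$ as a weighted sum $\sum_j N(j,i)\,\eta_j$, where $N(j,i)$ counts the directed paths from $j$ to $i$. Here the third hypothesis does the essential work: since each $j \in \anc(i)$ reaches $i$ by a unique path, every coefficient $N(j,i)$ collapses to $1$, so $X_i = \eta_i + \sum_{j \in \anc(i)} \eta_j$ with unit weights and no multiplicities.

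Next I would track the effect of imputation. Imputing at a node $m$ sets $Y_m$ to a constant, which both deletes $\eta_m$ and severs every path routed through $m$; under the unique-path structure this means $\eta_j$ survives in $Y_i$ if and only if the unique path $\pi_j$ from $j$ to $i$ avoids $I$ entirely. Using independence of the $\eta_j$, this yields the clean expression
\[
G(I) = \sum_{j \in \anc(i) \cup \{i\}} \I[\pi_j \cap I = \emptyset]\,\var(\eta_j).
\]
The final step is combinatorial: since each $\var(\eta_j) \geq 0$ and submodularity is closed under nonnegative combinations, it suffices to analyze each term, which is governed by the indicator of whether $I$ meets the fixed node set $\pi_j$, a coverage-type structure whose modularity can be checked directly by case analysis on whether the added element lies on $\pi_j$.

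I expect the main obstacle to be exactly this last step: verifying the submodular inequality in the marginal form of the definition for the indicator terms $\I[\pi_j \cap I = \emptyset]$, and in particular pinning down the direction of the inequality, since coverage indicators and their complements sit on opposite sides of the sub/supermodular divide. The path-counting derivation of the closed form is routine once the unique-path hypothesis is invoked; the delicate accounting lies in how imputation interacts with the additive propagation and in matching the resulting set function to the sign convention fixed by the submodularity definition.
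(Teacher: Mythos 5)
Your route is the paper's route: reduce to showing that $G(I) = \E[g(\impute(X;I))]$ is submodular, unroll the additive dynamics to get $X_i = \eta_i + \sum_{j \in \anc(i)} \eta_j$ with unit coefficients thanks to the unique-path hypothesis, and express $G(I)$ as a sum of surviving innovation variances; the paper writes the same quantity as $\E[g(X)] - \sum_{j \in I \cup \anc(I)} \|f_j^\xi(\xi_j) - \E f_j^\xi(\xi_j)\|_2^2$. The genuine gap is exactly the step you defer to a ``case analysis'': that case analysis does not come out on the side you need. For a fixed path $\pi_j$, the term $h_j(I) = \I[\pi_j \cap I = \emptyset]$ is \emph{supermodular}, not submodular: take $I_1 = \emptyset$ and $I_2 = \{m\}$ with $m \in \pi_j$, and add another node $i' \in \pi_j$; the marginal on the small set is $-1$ while the marginal on the large set is $0$, the reverse of diminishing returns. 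Concretely, on the chain $a \to b \to i$ with additive dynamics one gets $G(\{a\}) - G(\emptyset) = -\var(\eta_a) < 0 = G(\{a,b\}) - G(\{b\})$. A nonnegative combination $\sum_j \var(\eta_j) h_j(I)$ of supermodular terms is supermodular, so the worry you raise at the end about ``which side of the sub/supermodular divide'' the indicators land on is not a technicality to be checked but the point at which the argument fails; with $c \equiv 0$ the chain example is a counterexample to the claimed conclusion along this route.

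For what it is worth, the paper's own proof reaches the same marginal expression $G(I \cup \{i'\}) - G(I) = -\sum_{j \in \{i'\} \cup (\anc(i') \setminus \anc(I))} \var(\eta_j)$ and then invokes the containment $\anc(i') \setminus \anc(I_2) \subset \anc(i') \setminus \anc(I_1)$; but that containment makes the $I_1$-marginal the \emph{more} negative of the two, which is the supermodular inequality, so the paper's final step has the same sign problem you anticipated. What the computation does establish is that $-G$, i.e.\ the coverage-type function $\sum_j \var(\eta_j)\,\I[\pi_j \cap I \neq \emptyset]$, is nondecreasing and submodular --- which is the form actually used in Corollary~\ref{cor:sup} with $g'(Y) = -\|Y_i - \E Y_i\|_2^2$. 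So to complete a correct proof you would need to flip the sign of $g$ (or otherwise strengthen the hypotheses); your decomposition and closed form for $G$ are right, but the last inequality cannot be made to go the stated way.
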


Note here that we treat $\pa(X_i)$ as a vector in $\R^{n_i}$, where $n_i$ is the appropriate dimension. These assumptions encompass many graphical models where a node's parents set a location parameter, and the control objective is the second moment of some feature.

\begin{proof}
Note that the desired result will follow if we show that the set mapping $G : I \mapsto \E[g(\impute(X;I))]$ is submodular, since the sum of submodular functions is submodular. Throughout this proof, we use $i$ to refer to the index $i$ pulled out by the function $g$.

We can see that $G(\emptyset) = \E[g(X)]$. By the independence of the $(\xi_i)_{i \in V}$ and the form of the $(X_i)_{i \in V}$ relationships, we can write this as $\E[g(X)] = \sum_{j \in \anc(i)} \| f_j^{\xi}(\xi_j) - \E f_j^{\xi}(\xi_j) \|_2^2$. (Note that the unique path assumption ensures that each variance is only counted once in this sum.)

More generally, we can write an expression for $G(I)$. Note that if we impute at a node $j$, all the uncertainty due to node $j$, and the ancestors of $j$, is zeroed out. Thus, we can write $G(I) = \E[g(X)] - \sum_{j \in (I \cup \anc(I))} \| f_j^{\xi}(\xi_j) - \E f_j^{\xi}(\xi_j) \|_2^2$, where we define $\anc(I) = \cup_{j \in I} \anc(j)$.

Now, we can verify the submodularity condition on $G$. Pick $I_1 \subset I_2$ and $i' \in V \setminus I_2$. Then:
\[
G(I_1 \cup \{i'\}) - G(I_1) =
\]
\[
\sum_{j \in (I_1 \cup \anc(I_1))}  \| f_j^{\xi}(\xi_j) - \E f_j^{\xi}(\xi_j) \|_2^2 - 
\]
\[
\sum_{j \in (I_1 \cup \{i'\} \cup \anc(I_1 \cup\{i'\}))}  \| f_j^{\xi}(\xi_j) - \E f_j^{\xi}(\xi_j) \|_2^2 =
\]
\[
- \sum_{j \in \{i'\} \cup (\anc(i') \setminus \anc(I_1))}  \| f_j^{\xi}(\xi_j) - \E f_j^{\xi}(\xi_j) \|_2^2
\]
In words, the change in $G$ due to adding $i'$ to $I_1$ is the variances due to the terms related to $i'$ and the ancestors of $i'$ that have not already been zeroed out due to imputation, i.e. the ancestors of $i'$ that are not already ancestors of $I_1$. A similar derivation can be done for $I_2$.

Thus, we can verify that $G(I_1 \cup \{i'\}) - G(I_1) \geq G(I_2 \cup \{i'\}) - G(I_2)$ by noting that $\anc(i') \setminus \anc(I_2) \subset \anc(i') \setminus \anc(I_1)$, so the right-hand side of the inequality adds more negative terms. This concludes our proof.
\end{proof}

%%%%%%%%%%%%%%%%%%%%%%
\subsubsection{Submodular maximization}

Alternatively, suppose we are attempting to maximize a submodular function subject to a constraint, i.e. $F(I) = c(I) + \E[g(\impute(X;I))]$ subject to a constraint that $I \in S \subset 2^V$ and our objective is to solve $\max_{I \in S} F(I)$.\footnote{Strictly speaking, to remain consistent with the problem in Section~\ref{sec:model}, we should be solving $\min_{I \in S} - F(I)$, but we express it as a maximization for clarity of presentation.}
% where $-F$ is submodular. We will let $G = -F$ and see our optimization as $\max_I G(I)$.

First, consider the greedy method for submodular maximization. This is presented as Algorithm~\ref{alg:sub_max}. At each iteration, it simply adds an element to $I$ which maximizes $F(I \cup \{i\})$, if one exists. If one does not exist, it terminates and returns $I$. Under certain structural conditions, this algorithm yields approximate optimizers.

\begin{algorithm}[t]
\begin{algorithmic}
\STATE $I \gets \emptyset$
\WHILE{ $\max_{i : I \cup \{i\} \in S} F(I \cup \{i\}) - F(I) \geq 0$ }
\STATE Pick $i^* \in \arg\max_{i : I \cup \{i\} \in S} F(I \cup \{i\})$
\STATE $I \gets I \cup \{i^*\}$
\ENDWHILE
\RETURN $I$
\end{algorithmic}
\caption{The greedy approach for combinatorial maximization.}
\label{alg:sub_max}
\end{algorithm}

\begin{definition}
A set mapping $F : 2^V \rightarrow \R$ is nondecreasing if $F(S) \leq F(T)$ whenever $S \subset T$.
\end{definition}

The monotonicity condition effectively prevents the algorithm from straying too far from the optimum when taking the greedy approach, as shown in~\cite{Nemhauser1978}. Note that if $F$ is non-decreasing, then the condition $\max_{i : I \cup \{i\} \in S} F(I \cup \{i\}) - F(I) \geq 0$ is equivalent to the existence of $i \in V$ such that $I \cup \{i\} \in S$.

\begin{proposition} \cite{Nemhauser1978}
If $F$ is nondecreasing and submodular, then the greedy method presented in Algorithm~\ref{alg:sub_max} will return $I^* \in S$ such that $F(I^*) \geq \left( \frac{e-1}{e} \right) \max_{I \in S} F(I)$.
\end{proposition}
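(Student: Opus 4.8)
The plan is to prove the classical greedy guarantee by tracking how the optimality gap shrinks at each iteration. Since the factor $\left( \frac{e-1}{e} \right)$ is the one associated with a cardinality (uniform matroid) constraint, I would first make explicit the standing assumptions under which it holds: that $F$ is normalized and nonnegative, i.e. $F(\emptyset) = 0$ and $F(I) \geq 0$ for all $I$, and that $S = \{ I \subset V : |I| \leq k \}$ for some budget $k$. (For a general matroid constraint the greedy method only achieves a factor of $\frac{1}{2}$, so the cardinality structure is essential to the stated bound.) Let $O \in \arg\max_{I \in S} F(I)$ be an optimal solution, write $f^* = F(O)$, and let $I_0 = \emptyset \subset I_1 \subset \dots \subset I_k$ denote the iterates produced by Algorithm~\ref{alg:sub_max}, where $I_{t+1} = I_t \cup \{ i^* \}$ at step $t$. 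Because $F$ is nondecreasing, every marginal gain is nonnegative, so the while-loop never terminates early and the algorithm performs all $k$ insertions.

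The key step is a single-iteration inequality: at each $t$,
\[
F(I_{t+1}) - F(I_t) \geq \frac{1}{k} \left( f^* - F(I_t) \right).
\]
I would derive this from monotonicity and submodularity. By monotonicity, $f^* = F(O) \leq F(O \cup I_t)$. Enumerating $O \setminus I_t = \{ o_1, \dots, o_m \}$ with $m \leq k$ and telescoping,
\[
F(O \cup I_t) - F(I_t) = \sum_{\ell=1}^{m} \left[ F(I_t \cup \{o_1,\dots,o_\ell\}) - F(I_t \cup \{o_1,\dots,o_{\ell-1}\}) \right].
\]
Submodularity, in the diminishing-returns form of Equation~\eqref{eq:submodular}, bounds each summand above by $F(I_t \cup \{o_\ell\}) - F(I_t)$; and since each $o_\ell$ is itself a feasible single-element addition to $I_t$ (as $|I_t| < k$), the greedy rule — which selects the element of largest marginal gain — guarantees each such term is at most $F(I_{t+1}) - F(I_t)$. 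Combining, $f^* - F(I_t) \leq m \left[ F(I_{t+1}) - F(I_t) \right] \leq k \left[ F(I_{t+1}) - F(I_t) \right]$, which rearranges to the claimed inequality.

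With the per-step bound in hand, I would set $\delta_t = f^* - F(I_t)$ and rewrite it as the recursion $\delta_{t+1} \leq \left(1 - \frac{1}{k}\right) \delta_t$. Iterating from $\delta_0 = f^* - F(\emptyset) = f^*$ gives $\delta_k \leq \left(1 - \frac{1}{k}\right)^k f^* \leq e^{-1} f^*$, using the standard bound $\left(1 - \frac{1}{k}\right)^k \leq e^{-1}$. Unwinding the definition of $\delta_k$ yields $F(I_k) = f^* - \delta_k \geq \left( 1 - \frac{1}{e} \right) f^* = \left( \frac{e-1}{e} \right) \max_{I \in S} F(I)$, which is the desired conclusion for $I^* = I_k$.

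I expect the main obstacle to be the single-iteration inequality, and specifically the bookkeeping that pins the constant to exactly $\frac{1}{k}$: one must use monotonicity to pass from $O$ to $O \cup I_t$, the telescoping decomposition to split the total gain over the at-most-$k$ elements of $O \setminus I_t$, and submodularity to compare each telescoped marginal against the corresponding \emph{singleton} marginal that the greedy rule actually optimizes. The remaining effort — solving the linear recursion and invoking $\left(1 - \frac{1}{k}\right)^k \leq e^{-1}$ — is routine. The one genuinely conceptual point to flag is that the stated factor presumes the cardinality-constraint regime together with normalization $F(\emptyset)=0$; without these the same argument produces only a weaker constant.
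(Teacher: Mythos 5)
The paper offers no proof of this proposition --- it is quoted directly from \cite{Nemhauser1978} --- so there is nothing in-paper to compare against; what you have written is the classical Nemhauser--Wolsey--Fisher argument, and it is correct. The chain is the standard one: monotonicity gives $f^* \leq F(O \cup I_t)$, telescoping over the at-most-$k$ elements of $O \setminus I_t$ combined with submodularity bounds each increment by the best singleton marginal, the greedy rule bounds that by $F(I_{t+1}) - F(I_t)$, and the resulting recursion $\delta_{t+1} \leq (1 - \tfrac{1}{k})\delta_t$ with $\delta_0 = f^*$ yields the $\left(\frac{e-1}{e}\right)$ factor via $(1-\tfrac{1}{k})^k \leq e^{-1}$. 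Your caveat is the substantive point worth recording: the bound as stated needs $F(\emptyset) = 0$ (otherwise $\delta_0 \neq f^*$ and the unwinding gives a weaker additive form) and needs $S$ to be a cardinality constraint; the paper's statement allows an arbitrary $S \subset 2^V$, for which the greedy guarantee degrades (to $\tfrac{1}{2}$ for general matroids, and to nothing for arbitrary set systems). So your proof is correct under hypotheses that the paper leaves implicit, and you are right to make them explicit.
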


We now present a quick corollary of Theorem~\ref{theorem:submodular}, which provides conditions under which we can leverage the existing results for maximization of nondecreasing submodular functions.

\begin{corollary}
\label{cor:sup}
If:
\begin{itemize}
\item $g'(Y) = - \|Y_i - \E Y_i \|_2^2$ for some $i \in V$.
\item There exists functions $f_j^\xi$ such that, if $X_j$ has no parents, $X_j = f_j^\xi(\xi_j)$ and otherwise $X_j = \pa(X_j) + f_j^\xi(\xi_j)$.
\item For each $j \in \anc(i)$, there exists one unique path from $j$ to $i$.
\item $c(I)$ is nondecreasing and submodular.
\end{itemize}
Then $F(I) = c(I) + \E[g'(\impute(X;I))]$ is nondecreasing and submodular.
\end{corollary}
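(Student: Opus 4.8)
The plan is to reduce the claim to a statement about the single set function $G'(I) := \E[g'(\impute(X;I))]$. Since $c$ is assumed nondecreasing and submodular, it suffices to show that $G'$ is itself nondecreasing and submodular; then $F = c + G'$ inherits both properties, because sums of nondecreasing functions are nondecreasing and sums of submodular functions are submodular. I would establish the two properties of $G'$ by reusing the variance decomposition developed in the proof of Theorem~\ref{theorem:submodular}, carried through with the sign change induced by $g' = -g$, which flips the variance-removal term from a negative to a positive contribution.

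First I would record the decomposition. Under the location-parameter structure $X_j = \pa(X_j) + f_j^\xi(\xi_j)$ together with the unique-path assumption, the value $Y_i$ produced by $\impute(X;I)$ is an additive combination of the independent innovations $f_j^\xi(\xi_j)$ indexed by $j \in \anc(i) \cup \{i\}$, and imputing on $I$ deterministically fixes exactly those innovations with $j \in I \cup \anc(I)$. Writing $v_j := \|f_j^\xi(\xi_j) - \E f_j^\xi(\xi_j)\|_2^2 \geq 0$, this gives
\[
G'(I) = \sum_{j \in (\anc(i) \cup \{i\}) \cap (I \cup \anc(I))} v_j \;-\; \sum_{j \in \anc(i) \cup \{i\}} v_j,
\]
i.e. $G'(I) = R(I) - \mathrm{const}$, where $R(I)$ denotes the total innovation variance removed from $Y_i$ by imputing on $I$. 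The unique-path hypothesis is what guarantees this is a genuine sum with each $v_j$ appearing exactly once, so there is no double counting to corrupt the comparisons below, and the intersection with $\anc(i) \cup \{i\}$ records that only ancestors of $i$ (and $i$ itself) affect the objective.

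Monotonicity is then immediate: enlarging $I$ weakly enlarges $I \cup \anc(I)$, so $R(I)$ only gains nonnegative terms $v_j$; hence $R$, and therefore $G'$, is nondecreasing. For submodularity I would compute the marginal increment $R(I \cup \{i'\}) - R(I)$, which equals the sum of $v_j$ over the indices $j \in (\{i'\} \cup \anc(i')) \cap (\anc(i) \cup \{i\})$ that are not already in $I \cup \anc(I)$. As $I$ grows from $I_1$ to $I_2 \supset I_1$, the already-removed set $I \cup \anc(I)$ grows, so this remaining index set can only shrink; since every $v_j \geq 0$, the increment can only decrease. This is precisely the diminishing-returns inequality $R(I_1 \cup \{i'\}) - R(I_1) \geq R(I_2 \cup \{i'\}) - R(I_2)$, establishing submodularity of $R$ and hence of $G'$.

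The main obstacle is the index-set bookkeeping rather than any analytic difficulty: I must track exactly which innovation variances are zeroed as a function of $I$, intersect correctly with $\anc(i) \cup \{i\}$, and lean on the unique-path hypothesis so that the additive variance decomposition holds with no term counted twice. Once that accounting is pinned down, both the nondecreasing property and the submodularity inequality follow by comparing the two index sets, and adding the nondecreasing, submodular cost $c$ completes the argument.
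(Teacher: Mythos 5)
Your proof is correct and rests on the same variance decomposition the paper itself uses: $G'(I)$ is a constant plus the weighted coverage function $R(I)=\sum_{j\in C(I)} \E\,v_j$ with $C(I)=(I\cup\anc(I))\cap(\anc(i)\cup\{i\})$, and both monotonicity and submodularity follow because $C(\cdot)$ is a union of per-element sets and the weights are nonnegative. Where you genuinely differ is on the submodularity half: the paper disposes of it with ``this follows from Theorem~\ref{theorem:submodular}'' and only writes out the monotonicity argument, even though $g'=-g$ means Theorem~\ref{theorem:submodular} as stated concerns $c+G$ rather than $c+G'=c-G$, and negating a submodular function yields a supermodular one --- so the citation requires nontrivial sign bookkeeping that the paper leaves implicit. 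Your direct verification of the diminishing-returns inequality for $R$ closes that gap and makes the corollary self-contained; this is the stronger way to write it. Two small imprecisions you share with (or inherit from) the paper: $v_j$ should carry an expectation, i.e.\ $\E\|f_j^\xi(\xi_j)-\E f_j^\xi(\xi_j)\|_2^2$, since as written it is a random variable; and $C(I)$ slightly overcounts, because a node $k\in\anc(j)\cap\anc(i)$ with $j\in I$ but $j\notin\anc(i)\cup\{i\}$ is not actually blocked --- imputing at such a $j$ does not touch $Y_i$ at all. The correct blocked set is $\bigcup_{j\in I\cap(\anc(i)\cup\{i\})}\bigl(\{j\}\cup\anc(j)\bigr)$, which is still a union of per-element sets, so your coverage-function argument for both properties goes through unchanged after this correction.
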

\begin{proof}
This follows from Theorem~\ref{theorem:submodular} if we can show that $G': I \mapsto \E[g(\impute(X;I))]$ is nondecreasing. Let $Y = \impute(X;I)$, and note that adding elements to $I$ can only decrease the variance of $Y_i$. This can be formalized by noting, similar to the arguments in the proof of Theorem~\ref{theorem:submodular}, $G'(I) = \E[g'(X)] +\sum_{j \in (I \cup \anc(I))} \| f_j^{\xi}(\xi_j) - \E f_j^{\xi}(\xi_j) \|_2^2$. Thus, $G'$, the additive inverse of the variance of $Y_i$, is nondecreasing.
\end{proof}

Note the minus sign in $g'$ in Corollary~\ref{cor:sup}: in most instances where you are maximizing a submodular cost, you would still wish to reduce uncertainty, i.e. have a lower variance.

%%%%%%%%%%%%%%%%%%%%%%%%%%%%%%%%%%%%%%%%%%%%%%%%%%%%%%%%%%%%%%%%%%%%%%
\subsection{Linear-Gaussian case}

In this section, we consider causal imputation on a discrete-time linear dynamical system with Gaussian noise. That is, we analyze the special case of a random process with the form:
\[
X_{t+1}=AX_t+\epsilon_t
\]
Where $X_t \in \R^n$, $\epsilon_t \sim N(0,\sigma^2I)$ independently for $t=0,...,T$, and $A \in \R^{n \times n}$ is a matrix representing the dependencies.

This process can be represented as a causal graph in the form of a trellis, where the random variables are all Gaussian. More specifically, each node has its expected value equal to a linear combination of their parents, as described by a matrix $A$, and additive noise of the distribution $N(0,\sigma^2)$.

To analyze our optimal casual imputation problem, we first redefine the indices for this problem. Since our causal graph represents a process over time, we index into the process by state $k$, for $k=1,...n$ as well as a time $t$ for $t=0,...,T$. Thus $X_{kt}$ indicates the value of state $k$ at time $t$, and our graph has vertices $V = \{1,\dots,n\} \times \{0,\dots,T\}$. As before, $X_t$ represents the value of the vector of all the states of $X$ at time $t$, and we can think of $X$ as a vector in $\R^{nT}$. We assume that the cost of imputation $c_I(x_I)$ has the following form for some parameters $\delta_i, q_i \geq 0$:
\[c_I(x_I)=\sum_{i \in I} \delta_i + q_ix_i^2\]

Further, we look at the case where the system cost of interest is minimizing the expected distance of the the random process from some target trajectory $\bar{y}$. Thus $g(Y)=\|Y-\bar{y}\|^2_2$.

Our optimal causal imputation problem in this case is thus:
\begin{eqnarray}
\begin{split}
\min_{S \subset V} \min_{x_S \in \R^S}  & \sum_{i \in S} \left( \delta_i + q_i x_i^2 \right) + \E\left[ \| Y - \bar{y}\|_2^2 \right] \\
\mathrm{subject~to~} & Y = \impute(X;S,x_S)
\end{split}
\label{eqn:opt_gauss}
\end{eqnarray}
The summation term can be thought of as a cost of issuing control commands and the expectation term can be thought of as a trajectory tracking objective.

Given our structure on the random process, we can rewrite this optimization problem more concretely.

We first define $Q \in \mathbb{R}^{nT \times nT}$ to be diagonal matrix with the $q_i$'s on the diagonal. We define $\delta \in \mathbb{R}^{nT}$ to be the vector of $\delta_i$'s. Further, let $\bold{1}_{nT}$ denote the column vector of all ones in $\R^{nT}$.
%, and let $I$ denote the identity matrix of the appropriate dimension. 
%We denote the Hadamard product (i.e. element-wise multiplication) between vectors by $\odot$. 
Lastly, we define $\diag(S)$ to be the square matrix with the elements of $S$ on the diagonal, and zeros everywhere else.

The optimization in~\eqref{eqn:opt_gauss} now becomes:
%\begin{eqnarray*}
%\min_{\substack{S \in \{0,1\}^{nT} \\ \bar{x} \in \mathbb{R}^{nT}}} & (S\odot\bar{x})^\T Q(S\odot\bar{x}) + \delta^\T S+\E\left[\|Y-\bar{y}\|^2_2\right]\\
%\mathrm{subject~to~}&
%Y=\tilde{A}X+(\bold{1}_{nT}-S)\odot \epsilon +S\odot\bar{x}
%\\
%&
%\epsilon \sim \mathcal{N}(0,\sigma^2I)
%\end{eqnarray*}
%\[
%\tilde{A}=(I-\diag(S)) \begin{bmatrix}
%    0       & 0 & 0 & \dots & 0 & 0 \\
%    A       & 0 & 0 & \dots & 0 & 0 \\
%    0       & A & 0 & \dots & 0 & 0 \\
%   0 &0 &A &\ddots &0 & 0\\
%   \vdots &\vdots &\vdots &\ddots &\vdots& \vdots\\
%    0       & 0 & 0 & \dots & A & 0
%\end{bmatrix}
%\]
%This can further be simplified into a mixed integer program (MIP) by expanding the expectation. Thus, the resulting MIP is given by:
\begin{eqnarray*}
\min_{\substack{S \in \{0,1\}^{nT} \\ \bar{x} \in \mathbb{R}^{nT}}} &
\bar{x}^\T(Q+D)\bar{x} +\sigma^2\trace(DI_S)+\delta^\T S-2\bar{y}^\T\bar{x} \\
\mathrm{subject~to~} &
(S_i - 1) \bar{x}_i=0 ~ \text{ for all } i=1, \dots ,nT\\
&
P = (1 - \tilde{A})^{-1} \\
&
D = P^\T P \\
& 
I_S = I-\diag(S)
\end{eqnarray*}
\[
\tilde{A}= I_S\begin{bmatrix}
    0       & 0 & 0 & \dots & 0 & 0 \\
    A       & 0 & 0 & \dots & 0 & 0 \\
    0       & A & 0 & \dots & 0 & 0 \\
   0 &0 &A &\ddots &0 & 0\\
   \vdots &\vdots &\vdots &\ddots &\vdots& \vdots\\
    0       & 0 & 0 & \dots & A & 0
\end{bmatrix}
\]
We note that for any matrix $A$ and any $S$, the matrix $I-\tilde{A}$, with $\tilde{A}$ as defined above, is invertible, so $P$ will always be well-defined. 

Additionally, for a fixed $S$, the optimization across $\bar{x}$ is easy to solve. That is, $D$ is entirely determined by $S$. If we let $(Q+D)_S$ denote the submatrix of $(Q+D)$ indexed by the non-zero elements of $S$, and similarly $\bar{x}_S$ and $\bar{y}_S$, then the optimizer is given by $\bar{x}_S^* = (Q+D)_S^{-1} \bar{y}_S$, with the other entries of $\bar{x}^*$ equal to $0$.

Thus, we can easily calculate a set mapping $F(S)$ such that optimal causal imputation in the linear-Gaussian case is simply $\min_{S \subset V} F(S)$. We can solve this when $nT$ is relatively small, and are currently investigating properties of $F(S)$ which would allow us to apply combinatorial optimization techniques~\cite{Schrijver2003}.

%%%%%%%%%%%%%%%%%%%%%%%%%%%%%%%%%%%%%%%%%%%%%%%%%%%%%%%%%%%%%%%%%%%%%%
%%%%%%%%%%%%%%%%%%%%%%%%%%%%%%%%%%%%%%%%%%%%%%%%%%%%%%%%%%%%%%%%%%%%%%
%%%%%%%%%%%%%%%%%%%%%%%%%%%%%%%%%%%%%%%%%%%%%%%%%%%%%%%%%%%%%%%%%%%%%%
\section{Conclusion and Future Work}
\label{sec:conclusions}

% !TEX root = root.tex

The previous literature on mathematical formulations of causality has been focused on the \emph{estimation} of causal structures. In this paper, we presented the problem of \emph{control} of causal structures. 
We formally defined the problem of optimal causal imputation, and formulate solutions for it in two cases: where imputation is allowed to only a single value, and the case where the dynamics are linear and the noise is Gaussian. 
%We also applied the optimal causal imputation problem on distributions derived from the Residential Energy Consumption Survey dataset, which allowed us to determine which rebate programs would provide the most cost-effective reductions on energy consumption.

In future work, we hope to apply this framework to real situations which allow both the estimation of causal structures, as well as verification of the consequences and costs of imputation. Additionally, we hope to generalize our results to consider dynamical systems whose behavior are influenced by different features. For example, we can consider the dynamics of the power grid, but also account for frequently used machine learning features as well, such as the zip code of different energy consumers and the age of deployed assets. 

We believe that considering the control aspects of causality is increasingly more relevant. In many smart infrastructure applications, we no longer have control commands that directly affect the dynamics, but rather our control actions act more like causal imputations.  
%For example, the turn signal on a car does not have any direct effect on the dynamics of the car.
%%, but can be modeled as inducing a new probability distribution on the actions of the traffic infrastructure as whole. 
%As another example, many demand response programs are moving towards preferential pricing, rebates, and other incentive schemes to curtail peak energy consumption, rather than a direct load control. Both of these situations can be modeled with our optimal causal imputation framework. 
%Additionally, data analytics are becoming more and more powerful, in particular because both the algorithmic technology is maturing and the ubiquity of sensors. 
%However, as we begin to use the outputs of these algorithms to make decisions, we need our algorithms to provide more than predictive power: we need to know the causal effects. Furthermore, we need to know how to make use of the output of these algorithms that make causal inferences. 
The optimal causal imputation framework is a promising direction to model these interactions between machine learning and control, and provides a model for closing the loop on analytics in cyber-physical systems.

%%%%%%%%%%%%%%%%%%%%%%%%%%%%%%%%%%%%%%%%%%%%%%%%%%%%%%%%%%%%%%%%%%%%%%
%%%%%%%%%%%%%%%%%%%%%%%%%%%%%%%%%%%%%%%%%%%%%%%%%%%%%%%%%%%%%%%%%%%%%%
%%%%%%%%%%%%%%%%%%%%%%%%%%%%%%%%%%%%%%%%%%%%%%%%%%%%%%%%%%%%%%%%%%%%%%
\bibliographystyle{IEEEtran}
\bibliography{DONG_ROY_refs}

% Generated by IEEEtran.bst, version: 1.14 (2015/08/26)
\begin{thebibliography}{10}
\providecommand{\url}[1]{#1}
\csname url@samestyle\endcsname
\providecommand{\newblock}{\relax}
\providecommand{\bibinfo}[2]{#2}
\providecommand{\BIBentrySTDinterwordspacing}{\spaceskip=0pt\relax}
\providecommand{\BIBentryALTinterwordstretchfactor}{4}
\providecommand{\BIBentryALTinterwordspacing}{\spaceskip=\fontdimen2\font plus
\BIBentryALTinterwordstretchfactor\fontdimen3\font minus
  \fontdimen4\font\relax}
\providecommand{\BIBforeignlanguage}[2]{{%
\expandafter\ifx\csname l@#1\endcsname\relax
\typeout{** WARNING: IEEEtran.bst: No hyphenation pattern has been}%
\typeout{** loaded for the language `#1'. Using the pattern for}%
\typeout{** the default language instead.}%
\else
\language=\csname l@#1\endcsname
\fi
#2}}
\providecommand{\BIBdecl}{\relax}
\BIBdecl

\bibitem{Rubin1974}
D.~B. Rubin, ``Estimating causal effects of treatments in randomized and
  nonrandomized studies,'' \emph{Journal of Educational Psychology}, vol.
  66(5), pp. 688--701, 1974.

\bibitem{Imbens2015}
G.~W. Imbens and D.~B. Rubin, \emph{Causal Inference for Statistics, Social,
  and Biomedical Sciences: An Introduction}.\hskip 1em plus 0.5em minus
  0.4em\relax Cambridge University Press, 2015.

\bibitem{Athey2015}
S.~Athey and G.~W. Imbens, ``Machine learning methods for estimating
  heterogeneous causal effects,'' \emph{arXiV}, 2015.

\bibitem{Zhou2016}
D.~Zhou, M.~Balandat, and C.~J. Tomlin, ``Residential demand response targeting
  using machine learning with observational data,'' in \emph{55th IEEE
  Conference on Decision and Control (CDC)}, 2016.

\bibitem{Banerjee1260799}
\BIBentryALTinterwordspacing
A.~Banerjee, E.~Duflo, N.~Goldberg, D.~Karlan, R.~Osei, W.~Parient{\'e},
  J.~Shapiro, B.~Thuysbaert, and C.~Udry, ``A multifaceted program causes
  lasting progress for the very poor: Evidence from six countries,''
  \emph{Science}, vol. 348, no. 6236, 2015. [Online]. Available:
  \url{http://science.sciencemag.org/content/348/6236/1260799}
\BIBentrySTDinterwordspacing

\bibitem{Granger1969}
\BIBentryALTinterwordspacing
C.~W.~J. Granger, ``Investigating causal relations by econometric models and
  cross-spectral methods,'' \emph{Econometrica}, vol.~37, no.~3, pp. 424--438,
  1969. [Online]. Available: \url{http://www.jstor.org/stable/1912791}
\BIBentrySTDinterwordspacing

\bibitem{Brovelli2004}
\BIBentryALTinterwordspacing
A.~Brovelli, M.~Ding, A.~Ledberg, Y.~Chen, R.~Nakamura, and S.~L. Bressler,
  ``Beta oscillations in a large-scale sensorimotor cortical network:
  Directional influences revealed by {Granger} causality,'' \emph{Proceedings
  of the National Academy of Sciences of the United States of America}, vol.
  101, no.~26, pp. 9849--9854, 06 2004. [Online]. Available:
  \url{http://www.ncbi.nlm.nih.gov/pmc/articles/PMC470781/}
\BIBentrySTDinterwordspacing

\bibitem{Granger2000}
\BIBentryALTinterwordspacing
C.~W. Granger, B.-N. Huang, and C.-W. Yang, ``A bivariate causality between
  stock prices and exchange rates: Evidence from recent {Asian} flu,''
  \emph{The Quarterly Review of Economics and Finance}, vol.~40, no.~3, pp. 337
  -- 354, 2000. [Online]. Available:
  \url{http://www.sciencedirect.com/science/article/pii/S1062976900000429}
\BIBentrySTDinterwordspacing

\bibitem{Pearl2009}
J.~Pearl, \emph{Causality: Models, Reasoning and Inference}.\hskip 1em plus
  0.5em minus 0.4em\relax Cambridge University Press, 2009.

\bibitem{Gourieroux1987}
\BIBentryALTinterwordspacing
C.~Gourieroux, A.~Monfort, and E.~Renault, ``Kullback causality measures,''
  \emph{Annales d'ƒconomie et de Statistique}, no. 6/7, pp. 369--410, 1987.
  [Online]. Available: \url{http://www.jstor.org/stable/20075662}
\BIBentrySTDinterwordspacing

\bibitem{Amblard2012}
\BIBentryALTinterwordspacing
P.~Amblard and O.~J.~J. Michel, ``The relation between granger causality and
  directed information theory: a review,'' \emph{CoRR}, vol. abs/1211.3169,
  2012. [Online]. Available: \url{http://arxiv.org/abs/1211.3169}
\BIBentrySTDinterwordspacing

\bibitem{Pearl1998}
J.~Pearl, \emph{Graphical Models for Probabilistic and Causal Reasoning}.\hskip
  1em plus 0.5em minus 0.4em\relax Dordrecht: Springer Netherlands, 1998, pp.
  367--389.

\bibitem{Lauritzen2001}
S.~L. Lauritzen, ``Causal inference from graphical models,'' 2001.

\bibitem{Li2015}
\BIBentryALTinterwordspacing
J.~Li, S.~Ma, T.~D. Le, L.~Liu, and J.~Liu, ``Causal decision trees,''
  \emph{CoRR}, vol. abs/1508.03812, 2015. [Online]. Available:
  \url{http://arxiv.org/abs/1508.03812}
\BIBentrySTDinterwordspacing

\bibitem{Athey2015a}
S.~{Athey} and G.~{Imbens}, ``{Recursive Partitioning for Heterogeneous Causal
  Effects},'' \emph{ArXiv e-prints}, Apr. 2015.

\bibitem{Dawid2000}
\BIBentryALTinterwordspacing
A.~P. Dawid, ``Causal inference without counterfactuals,'' \emph{Journal of the
  American Statistical Association}, vol.~95, no. 450, pp. 407--424, 2000.
  [Online]. Available: \url{http://www.jstor.org/stable/2669377}
\BIBentrySTDinterwordspacing

\bibitem{Heckerman2006}
D.~Heckerman, C.~Meek, and G.~Cooper, \emph{A Bayesian Approach to Causal
  Discovery}.\hskip 1em plus 0.5em minus 0.4em\relax Berlin, Heidelberg:
  Springer Berlin Heidelberg, 2006, pp. 1--28.

\bibitem{Hoyer2009}
\BIBentryALTinterwordspacing
P.~O. Hoyer, D.~Janzing, J.~M. Mooij, J.~Peters, and P.~B. Sch\"{o}lkopf,
  ``Nonlinear causal discovery with additive noise models,'' in \emph{Advances
  in Neural Information Processing Systems 21}, D.~Koller, D.~Schuurmans,
  Y.~Bengio, and L.~Bottou, Eds.\hskip 1em plus 0.5em minus 0.4em\relax Curran
  Associates, Inc., 2009, pp. 689--696. [Online]. Available:
  \url{http://papers.nips.cc/paper/3548-nonlinear-causal-discovery-with-additive-noise-models.pdf}
\BIBentrySTDinterwordspacing

\bibitem{Kallenberg2002}
O.~Kallenberg, \emph{Foundations of Modern Probability}.\hskip 1em plus 0.5em
  minus 0.4em\relax Springer, 2002.

\bibitem{Schrijver2003}
A.~Schrijver, \emph{Combinatorial Optimization: Polyhedra and
  Efficiency}.\hskip 1em plus 0.5em minus 0.4em\relax Springer-Verlag Berlin
  Heidelberg, 2003.

\bibitem{Lovasz1983}
L.~Lov\'asz, \emph{Submodular functions and convexity}.\hskip 1em plus 0.5em
  minus 0.4em\relax Springer, 1983, pp. 235--257.

\bibitem{Nemhauser1978}
G.~L. Nemhauser, L.~A. Wolsey, and M.~L. Fisher, ``An analysis of
  approximations for maximizing submodular set functions,'' \emph{Mathematical
  Programming}, vol.~14, no.~1, pp. 265--294, 1978.

\end{thebibliography}

\end{document}